\documentclass{theoretics}
\usepackage{ifxetex}
%\ifxetex
%  \usepackage[no-math]{fontspec}
%\else
%\fi
%\usepackage{amsfonts}
%\usepackage{cleveref}
%\usepackage{charter}
%\ifxetex
%  \usepackage[no-math]{fontspec}
%  \usepackage[libertine]{newtxmath}
%\else
%  \usepackage{newtxmath}
%\fi
%\usepackage[tt=false]{libertine} %% tt is ugly 
%\usepackage{bbm}
\usepackage{bm}
\usepackage{array}
\usepackage{cleveref}
%\usepackage{enumitem}
% \newcolumntype{L}[1]{>{\raggedright\arraybackslash}p{#1}}
% \newcolumntype{C}[1]{>{\centering\arraybackslash}m{#1}}
% \newcolumntype{R}[1]{>{\raggedleft\arraybackslash}p{#1}}
% \renewcommand\arraystretch{1.5}  
%\renewcommand\arraystretch{2}
\usepackage{makecell}
\usepackage{float}
%\makesavenoteenv{tabular}

\renewcommand{\epsilon}{\varepsilon}

\def\Pr{\mathop{\mathbf{Pr}}\nolimits}
\def\Ex{\mathop{\mathbf{{}E}}\nolimits}
\def\Var{\mathop{\mathbf{Var}}\nolimits}

\newcommand{\abs}[1]{\left\vert#1\right\vert}

\newcommand{\tuple}[1]{\left(#1\right)}

\newcommand{\tp}{\tuple}

\newcommand{\numP}{\ensuremath{\#\mathbf{P}}}

\newcommand{\defeq}{:=}

\newcommand{\DTV}[2]{d_{\mathrm{TV}}({#1},{#2})}

\def\*#1{\bm{#1}} % Use \*A for \mathbf{A}
\def\+#1{\mathcal{#1}} % Use \+A for \mathcal{A}
\def\-#1{\mathrm{#1}} % Use \-A for \mathrm{A}
\def\=#1{\mathbb{#1}} % Use \=A for \mathbb{A}

\usepackage[textsize=tiny]{todonotes}

\usepackage{xifthen}

\makeatletter
\def\prob#1#2#3{\goodbreak\begin{list}{}{\labelwidth\z@ \itemindent-\leftmargin
                        \itemsep\z@  \topsep6\p@\@plus6\p@
                        \let\makelabel\descriptionlabel}
                \item[\it Name]#1
               \item[\it Instance]                #2
                \item[\it Output]#3
                \end{list}}
\makeatother

\addbibresource{dtv-alg.bib}

\title{A simple polynomial-time approximation algorithm for the total variation distance between two product distributions}

\ThCSthanks{Mark Jerrum was supported by grant EP/S016694/1 `Sampling in hereditary classes' from the Engineering and Physical Sciences Research Council (EPSRC) of the UK.
Weiming Feng, Heng Guo, and Jiaheng Wang have received funding from the European Research Council (ERC) under the European Union's Horizon 2020 research and innovation programme (grant agreement No.~947778).
 Jiaheng Wang has also received financial support from an Informatics Global PhD
Scholarship at The University of Edinburgh.}

\ThCSauthor[edinburgh]{Weiming Feng}{wfeng@ed.ac.uk}[https://orcid.org/0000-0003-4636-1023]
\ThCSauthor[edinburgh]{Heng Guo}{hguo@inf.ed.ac.uk}[https://orcid.org/0000-0001-8199-5596]
\ThCSauthor[london]{Mark Jerrum}{m.jerrum@qmul.ac.uk}[https://orcid.org/0000-0003-0863-7279]
\ThCSauthor[edinburgh]{Jiaheng Wang}{jiaheng.wang@ed.ac.uk}[https://orcid.org/0000-0002-5191-545X]
 
\ThCSaffil[edinburgh]{School of Informatics, University of Edinburgh,
  United Kingdom}
\ThCSaffil[london]{School of Mathematical Sciences, Queen Mary,
  University of London, United Kingdom}

\ThCSyear{2023}
\ThCSarticlenum{8}
\ThCSdoi{10.46298/theoretics.23.8}
\ThCSreceived{Dec 22, 2022}
% \ThCSrevised{????}
\ThCSaccepted{May 15, 2023}
\ThCSpublished{June 12, 2023}
\ThCSkeywords{total variation distance, product distribution, approximation algorithm}
\ThCSshortnames{W.\ Feng, H.\ Guo, M.\ Jerrum, J.\ Wang} 
\ThCSshorttitle{Approximating the total variation distance between two product distributions}

\begin{document}

\maketitle

\begin{abstract}
  We give a simple polynomial-time approximation algorithm for the total variation distance between two product distributions.
\end{abstract}

\section{Introduction}
The total variation (TV) distance is a fundamental metric to measure the difference between two distributions.
It is essentially the $L^1$ distance.
Unlike many other quantities for similar uses, such as the relative entropy and the $\chi^2$-divergence,
the TV distance does not tensorise over product distributions.
In fact, it was discovered recently that, somewhat surprisingly, exact computation of the total variation distance, even between product distributions over the Boolean domain, is \numP-hard~\cite{BGMMPV22}.

This leaves open the question of approximation complexity of the TV distance.
In \cite{BGMMPV22}, the authors give polynomial-time randomised approximation algorithms in two special cases over the Boolean domain,
when one of the distribution has marginals over $1/2$ and dominates the other,
or when one of the distribution has a constant number of distinct marginals.
Their method is based on Dyer's dynamic programming algorithm for approximating the number of knapsack solutions \cite{Dye03}. 

In this note, we give a simple polynomial-time approximation algorithm for total variation distance between two product distributions.
Our algorithm is based on the Monte Carlo method and does not have further restrictions.

\begin{theorem}\label{theorem-main} 
  Let $[q] = \{1,2,\ldots,q\}$ be a finite set. 
  There exists an algorithm such that given two product distributions $P,Q$ over $[q]^n$ and parameters $\epsilon > 0$ and $0 <\delta < 1$, 
  it outputs a random value $\widehat{d}$ in time $O(\frac{n^2}{\epsilon^2} \log \frac{1}{\delta})$ such that $(1-\epsilon)\DTV{P}{Q} \leq \widehat{d} \leq (1+\epsilon)\DTV{P}{Q} $ holds with probability at least $1-\delta$.
\end{theorem}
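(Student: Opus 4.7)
The plan is to run a Monte Carlo estimator based on the coordinate-wise maximal coupling of $P$ and $Q$. Let $D := \DTV{P}{Q}$, let $d_i := \DTV{P_i}{Q_i}$, and set $\widetilde D := 1 - \prod_{i=1}^n (1 - d_i)$. All of the $d_i$, and hence $\widetilde D$, can be computed exactly in $O(qn)$ time. The key observation is the sandwich
\[
D \;\le\; \widetilde D \;\le\; nD.
\]
The upper bound is immediate because the coordinate-wise maximal coupling is a legitimate coupling of $(P,Q)$ whose disagreement probability equals $\widetilde D$. The lower bound follows from $\widetilde D \le \sum_i d_i \le n \max_i d_i \le nD$, where the last step uses the data-processing inequality applied to the projection onto coordinate $i$. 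Consequently $\mu := D/\widetilde D$ lies in $[1/n, 1]$ whenever $\widetilde D > 0$; if $\widetilde D = 0$ then $P = Q$ and the algorithm outputs $0$.

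Under the coordinate-wise maximal coupling $\pi$, the marginals of $(X,Y)$ are $P$ and $Q$ and $\Pr_\pi[X = Y] = 1 - \widetilde D$. Define
\[
\phi(X, Y) \;:=\; \mathbbm{1}[P(X) \ge Q(X)] - \mathbbm{1}[P(Y) \ge Q(Y)] \in \{-1, 0, 1\}.
\]
Because $A := \{x : P(x) \ge Q(x)\}$ realises the TV supremum, $\Ex_\pi[\phi] = P(A) - Q(A) = D$. Since $\phi = 0$ whenever $X = Y$, conditioning on $X \neq Y$ gives $\Ex_\pi[\phi \mid X \neq Y] = D/\widetilde D = \mu$.

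The algorithm then draws $N$ independent samples $(X, Y)$ from $\pi$ conditional on $X \neq Y$, averages the corresponding $\phi$-values into $\widehat \mu$, and outputs $\widehat D := \widetilde D \cdot \widehat \mu$. Because $\widetilde D$ may be exponentially small, naive rejection sampling would be too slow; instead one exploits that the disagreement indicators $B_i := \mathbbm{1}[X_i \neq Y_i]$ are independent $\mathrm{Bern}(d_i)$, samples the index of the first disagreement directly from its explicit conditional distribution in $O(n)$ time (after $O(n)$ precomputation of the tail products $\prod_{j \ge i}(1-d_j)$), fills in the remaining $B_j$'s independently, and realises each $(X_i, Y_i)$ from the appropriate piece of the maximal coupling. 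Evaluating $\phi$ via the factorisation $P(\cdot)/Q(\cdot) = \prod_i P_i(\cdot)/Q_i(\cdot)$ also takes $O(n)$. Since $\phi \in [-1, 1]$ and $\mu \ge 1/n$, Hoeffding's inequality guarantees $|\widehat \mu - \mu| \le \epsilon \mu$ with probability $\ge 1 - \delta$ once $N = O(n^2 \log(1/\delta)/\epsilon^2)$; multiplying by the exact $\widetilde D$ preserves the relative error, and the total running time is $O(n^3 \epsilon^{-2} \log(1/\delta))$.

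The main obstacle is that the naive unbiased estimator $\max(0, 1 - Q(X)/P(X))$ for $X \sim P$ has mean $D$ but requires $\Omega(1/D)$ samples for multiplicative accuracy, which is infeasible when $D$ is exponentially small. The decomposition $D = \widetilde D \cdot \mu$, with $\widetilde D$ computed exactly and $\mu$ only polynomially small, sidesteps this obstruction; the remaining technical content is the linear-time sampling from $\pi$ conditional on $X \neq Y$, which is essential precisely because $\widetilde D$ itself can be tiny.
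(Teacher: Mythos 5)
Your proof is correct, and it uses a genuinely different estimator from the paper's. Both approaches share the same scaffolding: compute $\widetilde D := \Pr_{\+C}[X\neq Y]=1-\prod_i(1-\DTV{P_i}{Q_i})$ exactly for the coordinate-wise maximal coupling $\+C$, observe $\DTV{P}{Q}\le\widetilde D\le n\,\DTV{P}{Q}$, and then estimate the ratio $\mu=\DTV{P}{Q}/\widetilde D\in[1/n,1]$ by Monte Carlo against the conditional law of $\+C$ given $X\neq Y$. Where you diverge is in the choice of unbiased estimator for $\mu$. The paper draws only $\omega=X$ from the conditional marginal $\pi$ and uses the importance-weight-type quantity
\[
f(\omega)=\frac{\Pr_{\+O}[X=\omega\wedge X\neq Y]}{\Pr_{\+C}[X=\omega\wedge X\neq Y]}=\frac{\max\{0,P(\omega)-Q(\omega)\}}{\Pr_{\+C}[X=\omega\wedge X\neq Y]}\in[0,1],
\]
which requires the explicit characterisation of an optimal coupling (equations \eqref{eqn:opt-same}--\eqref{eqn:opt-diff}) and a short argument that $f\le 1$. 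You instead draw the full pair $(X,Y)$ from $\+C$ conditioned on disagreement and use the indicator difference $\phi(X,Y)=\mathbbm{1}[P(X)\ge Q(X)]-\mathbbm{1}[P(Y)\ge Q(Y)]\in\{-1,0,1\}$, whose correctness rests only on the elementary facts that $\DTV{P}{Q}=P(A)-Q(A)$ for $A=\{P\ge Q\}$, that $X\sim P$ and $Y\sim Q$ marginally, and that $\phi$ vanishes on the diagonal $\{X=Y\}$. Your estimator is arguably more elementary (no optimal-coupling calculus), at the cost of a signed range $[-1,1]$ rather than $[0,1]$, which only changes the Hoeffding constant by a factor of $4$ and hence leaves the $O(n^2\epsilon^{-2}\log(1/\delta))$ sample bound and the $O(n^3\epsilon^{-2}\log(1/\delta))$ running time intact. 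Your sampling routine (draw the index of the first disagreement from its explicit law, then fill in the rest independently) is also a valid alternative to the paper's sequential Bayes-rule sampler, with the same $O(n)$ per-sample cost; just note in a full write-up that evaluating $\mathbbm{1}[P(\cdot)\ge Q(\cdot)]$ via the product of ratios $P_i/Q_i$ needs a one-line special case when some $Q_i(\cdot)=0$.
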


Our algorithm can also handle the case where each coordinate has a different domain size without any change.
In \Cref{theorem-main}, the input product distributions are given by the marginal probability for each coordinate and each $c\in[q]$ in binary.
The stated running time assumes that all arithmetic operations can be done in $O(1)$ time.

To approximate the TV distance, the na\"{i}ve Monte Carlo algorithm works well when the two distributions are sufficiently far away.
However, when the TV distance is exponentially small, na\"{i}ve Monte Carlo may require exponentially many samples to return an accurate estimate.
Our idea is to consider a distribution that can be efficiently sampled from and yet boosts the probability that the two distributions are different.
Ideally, we would want to use the optimal coupling, but that is difficult to compute.
We use instead the coordinate-wise greedy coupling as a proxy,
where each coordinate is coupled optimally independently.
We further condition on the (potentially very unlikely) event that the two samples are different.
Normally, conditioning on an unlikely event is a bad move since computational tasks would become hard.
However, here they are still easy thanks to the independence of the coordinates under the coupling.
With this conditional distribution, our estimator is the ratio between the probabilities of the assignment in the optimal coupling and in the greedy coupling.
We show that this estimator is always bounded from above by $1$ and its expectation is at least $1/n$.
This means that the standard Monte Carlo method will succeed with high probability using only polynomially many samples.

One remaining question is if a deterministic approximation algorithm exists for the TV distance. 
The answer might be positive, because of the connection with counting knapsack solutions established by Bhattacharyya, Gayen, Meel, Myrisiotis, Pavan, and Vinodchandran \cite{BGMMPV22},
and the deterministic approximation algorithm for the latter problem
by Gopalan, Klivans, Meka, \v{S}tefankovi\v{c}, Vempala, and Vigoda~\cite{GKM10,GKMSVV11,SVV12}.

\section{Preliminaries}

Let $\Omega$ be a (finite) state space, and $P$ and $Q$ be two distributions over $\Omega$.
The total variation distance is defined by 
\begin{align*}
\DTV{P}{Q}\defeq\frac{1}{2}\sum_{\omega\in\Omega}\abs{P(\omega)-Q(\omega)}.
\end{align*}
It satisfies the following:
\begin{itemize}
  \item for any event $A\subseteq \Omega$, $\DTV{P}{Q}\ge \abs{P(A)-Q(A)}$;
  \item for any coupling $\+C$ between $P$ and $Q$, $\DTV{P}{Q}\le \Pr_{\+C}[X\neq Y]$,
    where $X \sim P$ and $Y\sim Q$.
\end{itemize}
In particular, there exists an event $A_O$ and an optimal coupling $\+O$ such that $\DTV{P}{Q}=\abs{P(A_O)-Q(A_O)}=\Pr_{\+O}[X\neq Y]$.
Optimal couplings are not necessarily unique. For any optimal coupling $\+O$, it holds that
\begin{align}\label{eqn:opt-same}
\forall \omega\in\Omega,\quad   \Pr_{\+O}[X=Y=\omega] = \min\{P(\omega),Q(\omega)\}.
\end{align}
The above equation holds because (1) for any valid coupling $\+C$, it holds that $\Pr_{\+C}[X=Y=\omega] \leq  \min\{P(\omega),Q(\omega)\}$; (2) to achieve the optimal coupling, every $\omega$ must achieve the equality.
We have
\begin{align}  \label{eqn:opt-diff}
  \Pr_{\+O}[X=\omega\wedge Y\neq X] = \Pr_{\+O}[X=\omega]-\Pr_{\+O}[X=Y=\omega] = \max\{0,P(\omega)-Q(\omega)\}.
\end{align}

\section{Algorithm}

From now on we consider only product distributions.
Let $\Omega=[q]^n$ be the state space, where $[q]=\{1,\dots,q\}$ is a finite set.
Let $P=P_1\otimes P_2\otimes \dots \otimes P_n$ and $Q=Q_1\otimes Q_2\otimes \dots \otimes Q_n$ be two product distributions.
%Assume that $\DTV{P_i}{Q_i}>0$ for all $1\le i\le n$,
%as otherwise we may remove such $i$ without changing the total variation distance.
Let $\+O$ be an (arbitrary) optimal coupling between $P$ and $Q$.

Let $\+C$ be the coordinate-wise greedy coupling.
Namely, for each coordinate $i$ and $c\in [q]$, $\Pr_{\+C}[X_i=Y_i=c]=\min\{P_i(c),Q_i(c)\}$,
and the remaining probability can be assigned arbitrarily as long as $\+C$ is a valid coupling (but each coordinate is independent).
In other words, for each $i \in [n]$, $\+C$ couples $P_i$ and $Q_i$ optimally and independently.
Note that
\begin{align}\label{eqn:Pr-C-neq}
  \Pr_{\+C}[X\neq Y] = 1-\Pr_{\+C}[X = Y] = 1-\prod_{i=1}^n ( 1 - \DTV{P_i}{Q_i})
\end{align}
can be computed exactly.

Consider the distribution $\pi$ such that
\begin{align}  \label{eqn:pi-def}
  \pi(\omega)\defeq\Pr_{\+C}[X=\omega\mid X\neq Y].
\end{align}
We may assume $P$ and $Q$ are not identical, as otherwise the algorithm just outputs $0$. 
This makes sure that the distribution $\pi$ is well-defined. 
The following lemma shows that we can draw random samples from $\pi$ efficiently.
\begin{lemma}  \label{lem:sampling}
  We can sample from the distribution $\pi$ in $O(n)$ time.
\end{lemma}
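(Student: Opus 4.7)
\medskip
\noindent\textbf{Proof plan.} The plan is to sample $X \sim \pi$ in two stages: first sample the differing coordinate set $S := \{i : X_i \neq Y_i\}$ under $\+C$ conditioned on $S \neq \emptyset$, then sample the values of $X$ coordinate by coordinate given $S$. Because $\+C$ couples each coordinate independently, the indicators $D_i$ of $\{X_i \neq Y_i\}$ are independent Bernoullis with parameter $p_i := \DTV{P_i}{Q_i}$, and the conditioning event $\{X \neq Y\}$ is exactly $\{(D_1,\dots,D_n) \neq (0,\dots,0)\}$. Thus $\pi$ equals the law of $X$ under $\+C$ given that the $D_i$'s are not all zero.

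The second stage will be routine: given $S$, each coordinate $X_i$ is sampled independently from its single-coordinate conditional law under the optimal coupling of $P_i$ and $Q_i$, namely from $c \mapsto \min\{P_i(c), Q_i(c)\}/(1-p_i)$ on $\{D_i = 0\}$ and from $c \mapsto \max\{0, P_i(c) - Q_i(c)\}/p_i$ on $\{D_i = 1\}$; these are exactly the conditional marginals prescribed by \eqref{eqn:opt-same} and \eqref{eqn:opt-diff} applied to $(P_i, Q_i)$. With arithmetic operations costing $O(1)$, each categorical draw costs $O(1)$, so this stage costs $O(n)$ overall.

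The main obstacle is the first stage: sampling $n$ independent Bernoullis conditioned on not being all zero, in $O(n)$ time. The event $\{D_1 = \cdots = D_n = 0\}$ may have probability arbitrarily close to $1$, so naive rejection fails. To get around this I would sample the index of the first difference, $k := \min\{i : D_i = 1\}$, directly from
\begin{align*}
\Pr[k = j] = \frac{p_j \prod_{i<j}(1-p_i)}{1 - \prod_{i=1}^n(1-p_i)}, \qquad j \in [n].
\end{align*}
After precomputing the prefix products $\prod_{i<j}(1-p_i)$ in $O(n)$ time, drawing $k$ by inverse-CDF costs $O(n)$. For $i < k$ we set $D_i = 0$, and for $i > k$ we set $D_i = 1$ independently with probability $p_i$. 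Correctness follows from the independence of the $D_i$: we have merely factored the conditioning event according to the location of the first $1$. Combining the two stages yields an $O(n)$-time sampler for $\pi$, as claimed.
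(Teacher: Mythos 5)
Your proof is correct and takes a genuinely different route from the paper's. The paper samples $\omega_1,\omega_2,\ldots,\omega_n$ sequentially, computing each conditional marginal $\pi_k(\cdot\mid\omega_1,\ldots,\omega_{k-1})$ via Bayes' law: the quantity $\Pr_{\+C}[X=Y\mid X_1=\omega_1,\ldots,X_k=\omega_k]$ factorises over coordinates (this is the paper's equation \eqref{eq-prod-coupling}), and after precomputing suffix products $\prod_{i>k}(1-\DTV{P_i}{Q_i})$ and maintaining prefix products, each marginal costs $O_q(1)$. Your approach instead introduces the indicator vector $D=(D_1,\ldots,D_n)$ of differing coordinates and splits the sampling in two: first draw $D$ conditioned on $D\neq\*0$ (handling the potentially near-sure conditioning event by sampling the position of the first $1$ via inverse-CDF against precomputed prefix products, then extending with fresh independent Bernoullis), and second draw $X_i$ given $D_i$ independently from the single-coordinate conditionals $\min\{P_i,Q_i\}/(1-p_i)$ or $\max\{0,P_i-Q_i\}/p_i$. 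Both are correct $O(n)$-time samplers. Your decomposition has the conceptual advantage of isolating the awkward conditioning to a one-dimensional problem over Bernoulli indicators, after which everything is unconditional and coordinate-wise; the paper's is a more mechanical direct computation of conditional marginals, but it avoids introducing the latent variable $D$ and so is arguably shorter to state. (One trivial omission: in your description of the first stage you should also explicitly set $D_k=1$, though this is clearly intended.)
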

\begin{proof}
We draw a random sample $\omega \in [q]^n$ from $\pi$ index by index. In the $k$-th step, where $1 \leq k \leq n$, we sample $\omega_k \in [q]$ from $\pi_k(\cdot\mid \omega_1,\omega_2,\ldots,\omega_{k-1})$, which is the marginal distribution on the $k$-th variable conditional on the values of the first $k-1$ variables being $\omega_1,\omega_2,\ldots,\omega_{k-1}$. By definition,
\begin{align*}
\pi_k(\omega_k\mid \omega_1,\omega_2,\ldots,\omega_{k-1}) = 	\frac{\Pr_{X\sim \pi}[\forall 1 \leq i \leq k, X_i = \omega_i]}{\Pr_{X\sim \pi}[\forall 1 \leq i \leq k-1, X_i = \omega_i]}.
\end{align*}
As $\omega_1,\ldots,\omega_{k-1}$ are sampled from the marginal distribution of $\pi$, the denominator is positive.
We show how to compute the numerator next, and the denominator can be computed similarly. By definition 
\begin{align*}
	& \Pr_{X\sim \pi}[\forall 1 \leq i \leq k, X_i = \omega_i] = \Pr_{(X,Y)\sim \+C}[\forall 1 \leq i \leq k, X_i = \omega_i \mid X \neq Y ]\\
\text{(by Bayes' law)}\quad	=\,& \tp{1-\Pr_{(X,Y)\sim \+C}[X = Y \mid \forall 1 \leq i \leq k, X_i = \omega_i ]} \cdot \frac{\prod_{i = 1}^k P_i(\omega_i)}{1-\prod_{i=1}^n (1-\DTV{P_i}{Q_i}) }.
\end{align*}
In the coupling $\+C$, every pair of $X_i$ and $Y_i$ is coupled optimally and independently. We have 
\begin{align}\label{eq-prod-coupling}
\Pr_{(X,Y)\sim \+C}[X = Y \mid \forall 1 \leq i \leq k, X_i = \omega_i] &=  \prod_{i=1}^k\frac{\Pr_{\+C}[X_i = Y_i = \omega_i]}{\Pr_{\+C}[X_i = \omega_i]} 	\prod_{i = k+1}^n\Pr_{\+C}[X_i = Y_i] \notag\\
 (\text{by~\eqref{eqn:opt-same}})\quad &= \prod_{i=1}^k\frac{\min\{P_i(\omega_i),Q_i(\omega_i)\}}{P_i(\omega_i)} 	\prod_{i = k+1}^n(1 - \DTV{P_i}{Q_i}).
\end{align} 
Combining the two equations, we can compute $\Pr_{X\sim \pi}[\forall 1 \leq i \leq k, X_i = \omega_i]$, and thus we can compute and sample from  $\pi_k(\cdot\mid \omega_1,\omega_2,\ldots,\omega_{k-1})$.
When sampling from the distribution $\pi$, we pre-process $\prod_{i = k+1}^n(1 - \DTV{P_i}{Q_i})$ for all $k$, and maintain the prefix products $\prod_{i=1}^k \min\{P_i(\omega_i),Q_i(\omega_i)\}$ and $\prod_{i=1}^k {P_i(\omega_i)}$.
This way, each conditional marginal distribution can be computed with $O_{q}(1)$ incremental cost. 
Hence, the total running time is $O_q(n)$, where $O_{q}(\cdot)$ hides a factor linear in $q$. 
\end{proof}

Let $\omega$ be a random sample from $\pi$.
Now consider the following estimator:
\begin{align}  \label{eqn:est-def}
  f(\omega)\defeq \frac{\Pr_{\+O}[X=\omega\wedge X\neq Y]}{\Pr_{\+C}[X=\omega\wedge X\neq Y]} =  \frac{ \max\{0,P(\omega)-Q(\omega)\}}{\Pr_{\+C}[X=\omega\wedge X\neq Y]},
\end{align}
where the second equality is due to \eqref{eqn:opt-diff}.
This estimator $f$ is well-defined,
because when $\Pr_{\+C}[X=\omega\wedge X\neq Y]=0$, $\pi(\omega)=0$ as well and $\omega$ will not be drawn.

In fact, if $\pi(\omega) = 0$, or equivalently $\Pr_{\+C}[X=\omega\wedge X\neq Y]=0$, 
it must be that $\max\{0,P(\omega)-Q(\omega)\}=0$.
This is because $\Pr_{\+C}[X=\omega\wedge X\neq Y]=0$ implies that either $\Pr_{\+C}[X=\omega]=P(\omega)=0$ or $\Pr_{\+C}[X\neq Y\mid X=\omega]=0$.
In the first case, $\max\{0,P(\omega)-Q(\omega)\}=0$.
In the second case $\Pr_{\+C}[Y=\omega\mid X=\omega]=1$,
which implies that $Q(\omega)\ge P(\omega)$, and $\max\{0,P(\omega)-Q(\omega)\}=0$ as well.

\begin{lemma}  \label{lem:est-compt}
  For any $\omega\in\Omega$ with $\pi(\omega) > 0$, $f(\omega)$ can be computed in $O(n)$ time.
\end{lemma}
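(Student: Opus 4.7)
The plan is to expand both the numerator and denominator of $f(\omega)$ into explicit products over the $n$ coordinates, and then observe that each such product can be evaluated in $O(n)$ arithmetic operations.

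First I would handle the numerator $\max\{0, P(\omega)-Q(\omega)\}$. Since $P = P_1\otimes\cdots\otimes P_n$ and $Q = Q_1\otimes\cdots\otimes Q_n$, we have $P(\omega) = \prod_{i=1}^n P_i(\omega_i)$ and $Q(\omega) = \prod_{i=1}^n Q_i(\omega_i)$, each computable in $O(n)$ time by sequentially multiplying the one-dimensional marginal probabilities; then taking the difference and the max with $0$ costs $O(1)$.

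For the denominator, the key step is to rewrite
\begin{align*}
\Pr_{\+C}[X=\omega\wedge X\neq Y] &= \Pr_{\+C}[X=\omega]-\Pr_{\+C}[X=Y=\omega]\\
&= \prod_{i=1}^n P_i(\omega_i) - \prod_{i=1}^n \Pr_{\+C}[X_i=Y_i=\omega_i],
\end{align*}
using the fact that in the coordinate-wise greedy coupling $\+C$ the coordinates are independent, so the joint probability of an event that factorises over coordinates is the product of per-coordinate probabilities. By the defining property of $\+C$, $\Pr_{\+C}[X_i=Y_i=\omega_i] = \min\{P_i(\omega_i), Q_i(\omega_i)\}$, which is obtained from the input in $O(1)$ time per coordinate. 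Hence the denominator reduces to computing two products of $n$ numbers, again $O(n)$ time in total.

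There is no real obstacle here; the only point worth checking is that the denominator is nonzero whenever $\pi(\omega) > 0$, which is exactly the hypothesis, so dividing is safe and the final $O(n)$ bound follows by combining the numerator and denominator computations.
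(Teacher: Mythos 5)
Your proof is correct and follows essentially the same route as the paper: expand both numerator and denominator of $f(\omega)$ into products over the $n$ coordinates using the independence of $\+C$ and the fact that $\Pr_{\+C}[X_i=Y_i=\omega_i]=\min\{P_i(\omega_i),Q_i(\omega_i)\}$, then evaluate those products in $O(n)$ time. The paper's only cosmetic difference is that it first divides numerator and denominator through by $P(\omega)$ (legitimate since $\pi(\omega)>0$ forces $P(\omega)>0$), yielding a formula in per-coordinate ratios $Q_i(\omega_i)/P_i(\omega_i)$ rather than raw products; under the paper's unit-cost arithmetic model the two are interchangeable.
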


\begin{proof}
  Note that
  \[
    \Pr_{\+C}[X=\omega\land X\neq Y]=P(\omega)\Pr_{\+C}[X\neq Y\mid X=\omega]=P(\omega)(1-\Pr_{\+C}[X=Y\mid X=\omega]). 
  \]
  Since $\pi(\omega) > 0$, it holds that $P(\omega) > 0$. Using~\eqref{eq-prod-coupling}, we have 
  \[
    f(\omega)=\max\left\{0,\frac{1-\frac{Q(\omega)}{P(\omega)}}{\frac{1}{P(\omega)}\Pr_{\+C}[X=\omega\land X\neq Y]}\right\}=\max\left\{0,\frac{1-\prod_{i=1}^{n}\frac{Q_i(\omega_i)}{P_i(\omega_i)}}{1-\prod_{i=1}^{n}\frac{\min\{P_i(\omega_i),Q_i(\omega_i)\}}{P_i(\omega)}}\right\}, 
  \]
  which can be computed in $O(n)$ time. 
\end{proof}

\begin{lemma}  \label{lem:est-prop}
  We have the following:
  \begin{align}
    \Ex_{\pi}f &= \frac{\Pr_{\+O}[X\neq Y]}{\Pr_{\+C}[X\neq Y]};  \label{eqn:est-bias}\\
    \frac{1}{n}&\le \Ex_{\pi}f\le 1. \label{eqn:est-exp-bound}
  \end{align}
  Moreover, for any $\omega\in\Omega$ with $\pi(\omega) > 0$,
  \begin{align}    \label{eqn:est-bound}
    0\le f(\omega)\le 1,
  \end{align}
  and it holds that
  \begin{align}\label{eqn:var<ex}
  	\Var_{\pi}f \leq \Ex_{\pi}f.
  \end{align}
\end{lemma}
\begin{proof}
  For \eqref{eqn:est-bias},
  Let $\Omega_+=\{\omega\in\Omega\mid\pi(\omega)>0\}$.
  Then,
  \begin{align*}
    \Ex_{\pi}f &= \sum_{\omega\in\Omega_+} \pi(\omega) \times \frac{\Pr_{\+O}[X=\omega\wedge X\neq Y]}{\Pr_{\+C}[X=\omega\wedge X\neq Y]}\\
    & = \sum_{\omega\in\Omega_+} \frac{\Pr_{\+C}[X=\omega\wedge X\neq Y]}{\Pr_{\+C}[X\neq Y]} \times \frac{\Pr_{\+O}[X=\omega\wedge X\neq Y]}{\Pr_{\+C}[X=\omega\wedge X\neq Y]}\\
    & =  \frac{\sum_{\omega\in\Omega_+}\Pr_{\+O}[X=\omega\wedge X\neq Y]}{\Pr_{\+C}[X\neq Y]} = \frac{\Pr_{\+O}[X\neq Y]}{\Pr_{\+C}[X\neq Y]},
  \end{align*}
  where in the last equation we used the aforementioned fact that $\pi(\omega)=0$ implies $\max\{0,P(\omega)-Q(\omega)\}=0$.

  For \eqref{eqn:est-exp-bound}, as $\+O$ is the optimal coupling, $\Pr_{\+O}[X\neq Y]\le\Pr_{\+C}[X\neq Y]$.
  For the other direction, notice that $\+O$ projected to coordinate $i$, denoted $\+O_i$, is a coupling between $P_i$ and $Q_i$.
  Thus,
  \begin{align*}
    \Pr_{\+O}[X\neq Y] \ge \max_{1\le i\le n} \Pr_{\+O_i}[X_i\neq Y_i] \ge \max_{1\le i\le n}\DTV{P_i}{Q_i}\,.
  \end{align*}
  On the other hand, by the union bound,
  \begin{align*}
    \Pr_{\+C}[X\neq Y] \le \sum_{i=1}^n \Pr_{\+C_i}[X_i\neq Y_i] = \sum_{i=1}^n\DTV{P_i}{Q_i} \le n \max_{1\le i\le n}\DTV{P_i}{Q_i}.
  \end{align*}

  For \eqref{eqn:est-bound}, the lower bound is trivial.
  For the upper bound, we only need to consider $\omega \in \Omega_+$ such that $P(\omega)>Q(\omega)$.
  In this case
  \begin{align*}
    f(\omega) & = \frac{ \max\{0,P(\omega)-Q(\omega)\}}{\Pr_{\+C}[X=\omega\wedge X\neq Y]} = \frac{P(\omega)-Q(\omega)}{\Pr_{\+C}[X=\omega]\Pr_{\+C}[X\neq Y\mid X=\omega]} \\
    &= \frac{P(\omega)-Q(\omega)}{P(\omega)(1-\Pr_{\+C}[X=Y\mid X=\omega])} = \frac{1-\frac{Q(\omega)}{P(\omega)}}{1-\Pr_{\+C}[X=Y\mid X=\omega]}.
  \end{align*}
  Since $\+C$ couples each coordinate independently,
  \begin{align*}
    \Pr_{\+C}[X=Y\mid X=\omega] = \prod_{i=1}^n \frac{\min\{P_i(\omega_i),Q_i(\omega_i)\}}{P_i(\omega_i)} \le \prod_{i=1}^n \frac{Q_i(\omega_i)}{P_i(\omega_i)}=\frac{Q(\omega)}{P(\omega)}.
  \end{align*}
  This finishes the proof of \eqref{eqn:est-bound}.
  
  For~\eqref{eqn:var<ex}, since $0 \leq f(\omega) \leq 1$ for all $\Omega \in \Omega_+$, $f(\omega)^2 \leq f(\omega)$ and thus $\Ex_{\pi}f^2 \leq \Ex_{\pi}f$. We have
  \begin{align*}
  \Var_{\pi}f = \Ex_{\pi}f^2 - (\Ex_\pi f)^2 \leq 	\Ex_{\pi}f^2 \leq \Ex_{\pi}f. &\qedhere
  \end{align*}
\end{proof}

\Cref{lem:est-prop} implies that standard Monte Carlo method can be used to accurately estimate $\Ex_{\pi}f = \frac{\Pr_{\+O}[X\neq Y]}{\Pr_{\+C}[X\neq Y]}$.
To implement the Monte Carlo algorithm, we use \Cref{lem:sampling} and \Cref{lem:est-compt}.

To be more specific, our approximate algorithm is to compute the median of means. 
The input contains the descriptions of $2n$ distributions $P_1,P_2,\ldots,P_n, Q_1,Q_2,\ldots,Q_n$ together with two parameters $\epsilon > 0$ and $0 < \delta < 1$.  
The algorithm proceeds as follows: 
\begin{itemize}
	\item for each $i$ from $1$ to $m = \lceil\frac{10n}{\epsilon^2}\rceil$, independently sample $\omega_i \sim \pi$ and let
	\begin{align*}
		F = \frac{1}{m}\sum_{i=1}^m f(\omega_i);
	\end{align*}
	\item use independent samples to compute $F$ for $s = 10\lceil \log\frac{1}{\delta} \rceil $ times to get $F_1,F_2,\ldots,F_s$ and let 
	\begin{align*}
		\widehat{F} = \text{Median}\{F_1,F_2,\ldots,F_s\};
	\end{align*} 
	\item output the value $\widehat{d} = (1-\prod_{i=1}^n ( 1 - \DTV{P_i}{Q_i}))\widehat{F}  $.
\end{itemize}

We claim that 
\begin{align}\label{eq-F}
	\Pr\left[ \abs{F - \Ex_{\pi}f} \geq \epsilon \Ex_{\pi}f \right] \leq \frac{1}{10}.
\end{align}
Assuming that \eqref{eq-F} holds, by the Chernoff bound, it holds that
\begin{align*}
	\Pr\left[ \abs{\widehat{F} - \Ex_{\pi}f} \geq \epsilon \Ex_{\pi}f \right] \leq \delta.
\end{align*}
Using~\eqref{eqn:est-bias} in  \Cref{lem:est-prop} and \eqref{eqn:Pr-C-neq}, we have
\begin{align*}
\Pr\left[ \abs{\widehat{d}-\DTV{P}{Q}}\ge \epsilon \DTV{P}{Q}\right] = \Pr\left[ \abs{\widehat{F} - \Ex_{\pi}f} \geq \epsilon \Ex_{\pi}f \right]\leq \delta.
\end{align*}
By~\Cref{lem:sampling} and \Cref{lem:est-compt}, the total running time is $O(nms) = O(\frac{n^2}{\epsilon^2}\log\frac{1}{\delta})$. This proves \Cref{theorem-main}.

Finally, we prove the claim~\eqref{eq-F}. Note that the expectation and the variance of the random variable $F$ satisfy that $\Ex F = \Ex_{\pi}f$ and $\Var{F} = \frac{1}{m}\Var_{\pi}f$. By Chebyshev's inequality, 
\begin{align*}
\Pr\left[ \abs{F - \Ex_{\pi}f} \geq \epsilon \Ex_{\pi}f \right] &= 	\Pr\left[ \abs{F - \Ex F} \geq \epsilon \Ex F \right] \leq \frac{\Var F}{\epsilon^2 (\Ex F)^2} = \frac{\Var_{\pi}f }{m\epsilon^2 (\Ex_{\pi}f)^2 }\\
\tag{by~\eqref{eqn:var<ex},~\eqref{eqn:est-exp-bound}, and~$m = \lceil\frac{10n}{\epsilon^2}\rceil$} &\leq  \frac{1}{m\epsilon^2 \Ex_{\pi}f}\leq \frac{n}{m\epsilon^2}\leq \frac{1}{10}.\\
%\tag{by~\eqref{eqn:est-exp-bound}} &\leq \frac{n}{m\epsilon^2}\\
%\tag{by~$m = \lceil\frac{10n}{\epsilon^2}\rceil$ } &\leq \frac{1}{10}.
\end{align*}

\printbibliography

\end{document}